\documentclass[12pt,reqno,twoside]{amsart}
\usepackage{amsmath,amssymb,amsfonts,amsthm,indentfirst,setspace}
\usepackage{xcolor}\usepackage{ulem}
\usepackage{graphicx}
\setcounter{MaxMatrixCols}{10}
\topmargin=1cm
\textwidth=35cc
\textheight=21cm
\baselineskip=16pt
\evensidemargin=0.8cm
\oddsidemargin=0.8cm
\newtheorem{theorem}{Theorem}[section]

\newtheorem{corollary}{Corollary}[section]

\newcommand{\be}{\begin{equation}}
\newcommand{\ee}{\end{equation}}
\newcommand{\bea}{\begin{eqnarray}}
\newcommand{\eea}{\end{eqnarray}}
\newcommand{\eeas}{\end{eqnarray*}}
\newcommand{\beas}{\begin{eqnarray*}}

\newcommand{\G}{\mathcal{G}}

\begin{document}

\title[How a conformally flat $(GR)_4$ impacts Gauss-Bonnet gravity?]{HOW A CONFORMALLY FLAT $(GR)_4$ IMPACTS GAUSS-BONNET GRAVITY?}
\author{Avik De \and Tee-How Loo \and Raja Solanki \and P.K. Sahoo}
\address{A. De\\
Department of Mathematical and Actuarial Sciences\\
Universiti Tunku Abdul Rahman\\
Jalan Sungai Long\\
43000 Cheras\\
Malaysia}
\email{de.math@gmail.com}
\address{T. H. Loo\\
Institute of Mathematical Sciences\\
Universiti Malaya\\
50603 Kuala Lumpur\\
Malaysia}
\email{looth@um.edu.my}
\address{Raja Solanki\\
Department of Mathematics, Birla Institute of Technology and Science-Pilani, Hyderabad Campus, Hyderabad 500078\\ 
India}
\email{rajasolanki8268@gmail.com}
\address{P.K.Sahoo\\
Department of Mathematics, Birla Institute of Technology and Science-Pilani, Hyderabad Campus, Hyderabad 500078\\ 
India}
\email{pksahoo@hyderabad.bits-pilani.ac.in}
\date{}
\maketitle

\begin{abstract}

First and foremost, we show that a $4$-dimensional conformally flat generalized Ricci recurrent spacetime $(GR)_4$ is an Einstein manifold. We examine such a spacetime as a solution of $f(R,\mathcal{G})$-gravity theory and it is shown that the additional terms from the modification of the gravitational sector can be expressed as a perfect fluid. Several energy conditions are investigated with $f(R,\mathcal{G})= R+\sqrt{\mathcal{G}}$ and $f(R,\mathcal{G})= R^2+\mathcal{G}ln\mathcal{G}$. For both the models, weak, null and dominant energy conditions are satisfied while strong energy condition is violated, which is a good agreement with the recent observational studies which reveals that the current universe is in accelerating phase.   
\end{abstract}
\maketitle
\textbf{Key words: } Modified Gauss-Bonnet gravity, Generalised Ricci recurrent, conformally flat, Energy conditions, $f(R,G)$ gravity

\section{\textbf{Introduction}}
Patterson considered a Riemannian manifold $M^n, \, (n>2)$ whose Ricci curvature tensor $R_{ij}$ is not zero  everywhere and also complies with $\nabla_{i}R_{jl}= A_iR_{jl}$, with 1-form $A_i (\neq 0)$, and called it a Ricci recurrent manifolds $R_n$ \cite{patterson}. The initial idea might be from purely mathematical curiosity fuelled by the complete classification of locally symmetric ($\nabla_lR_{ijkl}=0$) spaces, but by then interaction of Ricci calculus and theory of gravity were already increasing. After around forty years, in 1995, the idea was generalized by De et al. \cite{de} when they presented the notion of a generalized Ricci recurrent manifold as a non-flat Riemannian manifold of dimension $n$ whose Ricci curvature tensor fulfils:
\be \nabla _{i}R_{jl}= A_iR_{jl}+B_ig_{jl},\label{gr}\ee
where $A_i$ and $B_i$ are two non-zero 1-forms. It reduces to $R_n$ for a null $B_i$. A Lorentzian manifold of signature $(-,+,+,+)$ is a generalized Ricci recurrent spacetime $(GR)_4$ if its Ricci curvature tensor agrees with (\ref{gr}). 

A $(GR)_4$ spacetime was shown to be a generalized Robertson Walker spacetime with Einstein fiber if the Ricci curvature tensor is of Codazzi type \cite{pinaki}, also  for a constant Ricci scalar, it was shown that the stress-energy tensor is of semisymmetric type. Recently a $(GR)_4$ spacetime was studied in general relativistic cosmology in \cite{avik}. Under certain conditions a conformally flat $(GR)_4$ was shown to be a perfect fluid. Continuing to this study, in \cite{avik-2}, the present authors studied a conformally flat 
$(GR)_4$ with constant $R$ in $F(R)$ theory. The presently accepted homogeneous and isotropic model of our universe, the Robertson-Walker type spacetime is $(GR)_4$ if and only if the Ricci curavture tensor is parallel ($\nabla_kR_{ij}=0$). It is also shown that the equation of state (EoS) $\omega = -1$. 

With the breakthrough observation about the accelerated universe, the limitation of the standard theory of gravity governed by Einstein's field equations (EFE), $R_{ij}-\frac{R}{2}g_{ij}=\kappa T^{(m)}_{ij}$, is quite visible unless the dark energy is solely accounted for this accelerating expansion.  The simplest candidate for dark energy is the cosmological constant $\Lambda$ i.e. the fluid responsible for high negative pressure. The model with cosmological constant $\Lambda$ is characterized by equation of state $\omega_\Lambda=-1$ and known as $\Lambda$CDM model \cite{SMC,W}. Even though $\Lambda$CDM model agrees with observational data, it has some drawbacks like the fine-tuning problem. The fine-tuning problem refers to the instability between the theoretically predicted value and the observed value of the cosmological constant \cite{Cop} . This motivated some researchers to modify the gravitational sector of the EFE and to explain the negative pressure as a curvature effect \cite{nojiri,nojiri1,nojiri2}. Replacing the Ricci scalar $R$ with a function $f(R,\mathcal{G})$ of $R$ and of the Gauss-Bonnet topological invariant 
$$\mathcal{G}=R_{ijkl}R^{ijkl}-4R_{kl}R^{kl}+R^2.$$
in the action term
\[S=\frac{1}{2\kappa}\int d^4x\sqrt{-g}f(R,\mathcal{G})+S^{\text{matter}},\] 
where the $S^{\text{matter}}$ is the action term of the standard matter fields, the least action principle gives the field equations of gravity 
\begin{align}\label{gb0}
0=&\kappa T^{(m)}_{ik}+\frac12fg_{ik}	
			+f_\G\Big(-2R_{ik}R+4R_{mink}R^{mn}-2 R_{impq}R_{k}{}^{mpq}+4R^m_iR_{mk}\Big) \notag\\
&+\Big(2R\nabla_k\nabla_i-2g_{ik}R\Box-4R^m_i\nabla_m\nabla_k-4R^m_k \nabla_m\nabla_i 
									+4R_{ik} \Box	\Big)f_\G  \notag\\
			&+\Big(4g_{ik}R_{mn} \nabla^n\nabla^m 		 -4R_{nimk}\nabla^n\nabla^m 
						\Big)f_\G   
					-(R_{ik}-\nabla_i\nabla_k+g_{ik}\Box)f_R,
\end{align}
where the term $T^{(m)}_{ik}$ produces from $S^{\text{matter}}$ and
\[
f_R=\frac{\partial f}{\partial R}; \qquad f_\G=\frac{\partial f}{\partial \G}.
\]
In most cases the stress-energy tensor is considered to be of perfect fluid form 
\be T^{(m)}_{ij}=(p^{(m)}+\rho^{(m)})u_iu_j+p^{(m)}g_{ij}.\ee
$p^{(m)}$ and $\rho^{(m)}$ respectively denote the isotropic pressure and energy density of the perfect fluid. We further assume that the four velocity vector of the perfect fluid type stress-energy tensor coincides with $A^i$.

In particular, when the spacetime is conformally flat, 
The equation (\ref{gb0}) can be rewritten as 
\be 
R_{ik}-\frac1{2}Rg_{ik}=\kappa (T^{(m)}_{ik}+T^{\text{curv}}_{ik})=\kappa T^{\text{eff}}_{ik},\label{gb}
\ee
where the term $T^{\text{curv}}_{ik}$ results from the geometry of the spacetime and is given by:
\begin{align*}
\kappa T^{\text{curv}}_{ik}
=&(\nabla_i\nabla_k-g_{ik}\Box)f_R+2R(\nabla_i\nabla_k-g_{ik}\Box)f_\mathcal{G}
			-4(R^m_i\nabla_m\nabla_k+R^m_k\nabla_m\nabla_i)f_\mathcal{G} \notag\\
 &+4(R_{ik}\Box+g_{ik}R_{mn}\nabla^n\nabla^m-R_{nimk}\nabla^n\nabla^m)f_\mathcal{G}\notag\\
 &-\frac{1}{2}g_{ik}(Rf_R+\G f_\G-f)+(1-f_R)\Big(R_{ik}-\frac{R}{2}g_{ik}\Big).
\end{align*}

To explain the inflationary epoch of the universe, the $f(R)$ theory of gravity gained much attention.  Another modification of GR uses Lovelock invariant, such as the Gauss Bonnet scalar $\mathcal{G}$. It can successfully describe both the inflationary and the dark energy epoch \cite{BL,Noj-2,Noj-3}. It is believed that the mechanism of inflation can be improved by taking the combination of the two scalars $R$ and $\mathcal{G}$. In cosmology, the framework of Gauss-Bonnet gravity adds higher order terms in the Hubble parameter which have main consequences in the early epoch of the universe \cite{observe}. The  $f(R,\mathcal{G})$ theory of gravity has gained much attention of cosmologists \cite{DF,MM,SC,pheno}. Azhar et al. have investigated the impact of $f(R,\mathcal{G})$ gravity on gravitational baryogenesis and observational bounds \cite{Az}. Alvaro et al. have analyzed the stability of the cosmological solutions in the modified $f(R,\mathcal{G})$ gravity \cite{Al}. Apart from phenomenological approaches, first principles like energy conditions, causal structure, and the classification of singularities can be considered to restrict the possible forms of $f(R,\mathcal{G})$.\\
The different energy conditions play a crucial role to discuss many important issues in cosmology. It is very helpful in the proof of some theorems such as the law of black hole thermodynamics or no hair theorem \cite{SW}. These different energy conditions can be obtained with the help of the very popular Raychaudhuri equation. The energy conditions were originally obtained in the standard theory of gravity. By introducing an effective energy density and effective pressure one can obtain these conditions in $f(R,\mathcal{G})$ gravity. Plenty of authors have studied the energy conditions in different modified gravity theories, see the references \cite{SM,SA,JS,MS}. 

The present paper is organized as follows: after introduction, in Section 2 we show that a conformally flat $(GR)_4$ spacetime is an Einstein manifold. We study the Gauss-Bonnet gravity theory in such a spacetime and derive the effective pressure and energy density. In Section 3, we discuss some energy conditions by taking two different $f(R,\mathcal{G})$ models. Finally, in the last section we discuss our results.    


\section{\textbf{Curvature tensors in conformally flat $(GR)_4$}}
In this section we thoroughly investigate the curvature properties of conformally flat $(GR)_4$. We prove the following result without any additional assumption on the spacetime or the associated vectors which is a common practice while studying this topic.

\begin{theorem}\label{res1}
A conformally flat $(GR)_4$ is an Einstein manifold, in particular, 
$R_{kl}=-\varepsilon\lambda g_{kl}$
where $\lambda=A^iB_i$ and $A^iA_i=\varepsilon=\pm1$, provided $A^i$ is a unit vector.
Furthermore, it is either a de Sitter space or an anti-de Sitter space.
\end{theorem}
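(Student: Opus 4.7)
The plan is to exploit the two pieces of data --- the $(GR)_4$ recurrence and the vanishing of the Weyl tensor --- to determine $R_{jl}$ algebraically. I would first record the standard contractions of (\ref{gr}): tracing with $g^{jl}$ gives $\nabla_iR=A_iR+4B_i$, and contracting with $g^{ij}$ together with the Bianchi identity $2\nabla^jR_{jl}=\nabla_lR$ gives $A^jR_{jl}=\tfrac{1}{2}RA_l+B_l$. Conformal flatness in dimension four produces the Cotton identity
\[
\nabla_kR_{jl}-\nabla_lR_{jk}=\tfrac{1}{6}(g_{jl}\nabla_kR-g_{jk}\nabla_lR);
\]
substituting (\ref{gr}) on the left and the trace relation on the right yields a single algebraic identity linking $A_i$, $B_i$ and $R_{jl}$. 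Antisymmetrising in $j,l$ (using $R_{jl}=R_{lj}$) forces $A_jB_l=A_lB_j$, and since $A^i$ is a unit vector this gives $B_i=\varepsilon\lambda A_i$ with $\lambda=A^iB_i$.

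Feeding $B=\varepsilon\lambda A$ back in and contracting with $A^k$ reduces the identity to the explicit form
\[
R_{jl}=\left(\tfrac{R}{6}-\tfrac{\varepsilon\lambda}{3}\right)g_{jl}+\tfrac{1}{3}(\varepsilon R+4\lambda)A_jA_l,
\]
so the whole theorem now hinges on showing the coefficient of $A_jA_l$ vanishes. This is the main obstacle. To close it I would plug the displayed expression back into the recurrence and differentiate. Tracing the resulting identity with $A^jA^l$ (using $A^l\nabla_iA_l=0$, which holds because $A^iA_i=\varepsilon$ is constant) forces $\nabla_i\lambda=0$, so $\lambda$ is constant; tracing with only $A^j$ then isolates the clean relation $\tfrac{1}{3}(\varepsilon R+4\lambda)\nabla_iA_l=0$.

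Either the coefficient vanishes --- in which case the first term simplifies to $-\varepsilon\lambda g_{jl}$ and one obtains $R_{jl}=-\varepsilon\lambda g_{jl}$ directly --- or $\nabla_iA_l=0$. In the latter case the Ricci identity gives $R_{ijkl}A^l=0$, whose trace yields $R_{jl}A^l=0$; comparing with $A^jR_{jl}=(\tfrac{R}{2}+\varepsilon\lambda)A_l$ forces $R=-2\varepsilon\lambda$, whence $\nabla_iR=(R+4\varepsilon\lambda)A_i=2\varepsilon\lambda A_i$ must vanish, producing $\lambda=0$ and hence $B_i=0$, contradicting the defining non-triviality of $(GR)_4$. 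Finally, $R_{jl}=-\varepsilon\lambda g_{jl}$ with constant $\lambda$ gives constant scalar curvature $R=-4\varepsilon\lambda$; Einstein plus Weyl-flatness then forces $R_{ijkl}=-\tfrac{\varepsilon\lambda}{3}(g_{ik}g_{jl}-g_{il}g_{jk})$, a space of constant sectional curvature which is de Sitter or anti-de Sitter according to the sign of $-\varepsilon\lambda$.
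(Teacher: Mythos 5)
Your route is essentially the paper's: the same two contractions of (\ref{gr}), the same conformal-flatness (Cotton) identity, the same reduction to $R_{jl}=\bigl(\tfrac{R}{6}-\tfrac{\varepsilon\lambda}{3}\bigr)g_{jl}+\tfrac{1}{3}(\varepsilon R+4\lambda)A_jA_l$ (the paper's (\ref{a9})), the same differentiation argument giving $\nabla_i\lambda=0$ and then $(\varepsilon R+4\lambda)\nabla_iA_l=0$, and the same two-case conclusion; your way of excluding the case $\nabla_iA_l=0$ (deduce $R=-2\varepsilon\lambda$, then $\lambda=0$, hence $B_i=0$, contradicting the definition of $(GR)_4$) is a harmless variant of the paper's contradiction with its case hypothesis $R+4\varepsilon\lambda\neq0$.

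The one step that does not work as you state it is the derivation of $A_jB_l=A_lB_j$. The algebraic identity produced by the Cotton relation is $A_kR_{jl}-A_lR_{jk}=\tfrac{R}{6}(g_{jl}A_k-g_{jk}A_l)-\tfrac{1}{3}(g_{jl}B_k-g_{jk}B_l)$, and antisymmetrising it in $j,l$ merely reproduces the same identity with the indices relabelled (its total cyclic sum over $j,k,l$ vanishes identically), so that operation alone yields no new relation and in particular not $A_{[j}B_{l]}=0$. What closes the step is a transvection with $A$: contracting the identity with $A^j$ on the free index and using $A^jR_{jl}=\tfrac{R}{2}A_l+B_l$ gives $A_kB_l-A_lB_k=\tfrac{1}{3}(A_kB_l-A_lB_k)$, hence $A_kB_l=A_lB_k$, after which contraction with $A^k$ and $A^kA_k=\varepsilon$ yields $B_l=\varepsilon\lambda A_l$; equivalently, the paper transvects with $A^k$ first and only then invokes the symmetry $R_{jl}=R_{lj}$. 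With this local repair the remainder of your argument is sound and coincides with the paper's proof.
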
 
\begin{proof}
Since the spacetime is generalized Ricci recurrent, the Ricci curvature tensor $R_{kl}$ must satisfy (\ref{gr})
with one additional assumption that $A^i$ is a unit vector with $A^iA_i=\varepsilon=\pm1$ . 
Contracting $j$ and $l$ in (\ref{gr}) we obtain
\be \nabla_iR=RA_i+4B_i.\label{a1}\ee
Contracting  $i$ and $l$ in (\ref{gr}) we obtain
\be \frac{1}{2}\nabla_jR=A_lR^l_j+B_j.\label{a2}\ee
It follows from (\ref{a1}) and (\ref{a2}) that 
\be 2A_lR^l_i=RA_i+2B_i.\ee
On the other hand, from the conformal flatness of the spacetime we have
\[
\nabla_i R_{kl}-\nabla_k R_{il}=\frac{g_{kl}\nabla_iR-g_{il}\nabla_kR}{6}.
\]
Using (\ref{gr}) and (\ref{a1}) in the above equation we calculate
\be 
6A_iR_{kl}-6A_kR_{il}=RA_ig_{kl}-RA_kg_{il}-2B_ig_{kl}+2B_kg_{il},
\label{a4}\ee
Transvecting (\ref{a4}) with $A^i$ we obtain
\bea
6\varepsilon  R_{kl}=\varepsilon Rg_{kl}-2A^iB_ig_{kl}+2RA_kA_l+2A_lB_k+6A_kB_l.
\label{a6}\eea
Swapping $k$ and $l$ in (\ref{a6}) we obtain
\[
6{\varepsilon}  R_{lk}={\varepsilon}  Rg_{lk}-2A^iB_ig_{lk}+2RA_lA_k+2A_kB_k+6A_lB_k.
\]
These two equations imply
\be B_i={\varepsilon}\lambda A_i, \quad (\lambda=  A^lB_l),\label{a8}\ee
Armed with this significant result, using (\ref{a6}) we can conclude that the Ricci curvature tensor and the Ricci scalar satisfy
\be 6R_{kl}=(R-2{\varepsilon}  \lambda)g_{kl}
			+(2R+8{\varepsilon}\lambda){\varepsilon}  A_kA_l, \quad	
			2A_lR^l_i=(R+2{\varepsilon}  \lambda)A_i										\label{a9}\ee
\be \nabla_iR=(R+4{\varepsilon}  \lambda)A_i, \label{a10}\ee
Covariantly differentiating (\ref{a9}) with respect to $i$ we obtain
\begin{align*}
6\nabla_iR_{kl}
=& (\nabla_iR-2{\varepsilon}\nabla_i\lambda)g_{kl}
				+(2\nabla_iR+8{\varepsilon}\nabla_i\lambda){\varepsilon}  A_kA_l
				+(2R+8{\varepsilon} \lambda){\varepsilon}\nabla_i(A_kA_l)\\
=& \{(R+4{\varepsilon}\lambda)A_i-2{\varepsilon}\nabla_i\lambda\}g_{kl}
				+\{(2R+8{\varepsilon}\lambda)A_i
						+8{\varepsilon}\nabla_i\lambda\}{\varepsilon}A_kA_l	\\
 &+(2R+8{\varepsilon} \lambda){\varepsilon}\nabla_i(A_kA_l).
\end{align*} 
On the other hand,  (\ref{gr}), (\ref{a8}) and (\ref{a9}) give
\begin{align*}
6\nabla_iR_{kl}
=& (R+4{\varepsilon}\lambda)A_ig_{kl}
				+(2R+8{\varepsilon}\lambda){\varepsilon}A_iA_kA_l.	
\end{align*} 
These two equations give
\begin{align*}
 -{\varepsilon}g_{kl}\nabla_i\lambda +4A_kA_l\nabla_i\lambda 	
  +(R+4{\varepsilon} \lambda){\varepsilon}\nabla_i(A_kA_l)=0.
\end{align*}
Transvecting with $A^k$ and $A^l$ we obtain
\begin{align}\label{a11}
\nabla_i\lambda=0.
\end{align}
It follows from these two equations that 
\begin{align}\label{a12}
(R+4{\varepsilon} \lambda)\nabla_i(A_kA_l)=0.
\end{align}

We consider two cases: $R+4{\varepsilon} \lambda\neq0$ and $R+4{\varepsilon} \lambda=0$.
\begin{description}
\item[a] If $R+4{\varepsilon} \lambda\neq0$, we have $\nabla_i A_l=0$ and so $A_lR_i^l=0$.
It follows from (\ref{a9}) that 
\be \label{bbb} R=-2{\varepsilon}\lambda.
\ee
It follows from this equation and (\ref{a11}) that 
\[
\nabla_iR=-2{\varepsilon}\nabla_i\lambda=0.
\]
After applying this to (\ref{a10}) gives
$R+4{\varepsilon}\lambda=0$; a contradiction. 
Hence this case is impossible.
\item[b] Suppose that  
\be R+4{\varepsilon}\lambda=0.
\label{rscalar}
\ee
Therefore, using (\ref{a9}) we obtain
\be 
R_{kl}=-{\varepsilon}\lambda g_{kl},
\label{ricci}
\ee
meaning that it is an Einstein space. As a conformaly flat Einstein space is of constant sectional curvature.
We conclude that a conformally flat generalized Ricci recurrent spacetime is either a de Sitter space or an anti-de Sitter space.
\end{description}
\end{proof}

\begin{corollary}\label{res2}
In a conformally flat $(GR)_4$, the Gauss-Bonnet term can be expressed as $$\mathcal{G}=\frac{8}{3}\lambda^2,$$ provided the unit vector $A^i$ is not parallel.
\end{corollary}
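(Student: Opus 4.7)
The plan is to leverage the full strength of Theorem \ref{res1}: since a conformally flat $(GR)_4$ is shown there to be a space of constant sectional curvature with $R_{kl}=-\varepsilon\lambda g_{kl}$ and $R=-4\varepsilon\lambda$, the entire Riemann tensor is pinned down explicitly, so the Gauss-Bonnet scalar becomes a purely algebraic calculation.

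First I would recall that in a spacetime of constant sectional curvature the Riemann tensor takes the form
\[
R_{ijkl}=\frac{R}{n(n-1)}\bigl(g_{ik}g_{jl}-g_{il}g_{jk}\bigr),
\]
which in dimension $n=4$ reduces to $R_{ijkl}=\tfrac{R}{12}(g_{ik}g_{jl}-g_{il}g_{jk})$. Next, I would compute each of the three contractions appearing in
$\mathcal{G}=R_{ijkl}R^{ijkl}-4R_{kl}R^{kl}+R^{2}$ one by one. Using the identities $g_{ik}g^{ik}=4$ and $g_{ik}g_{jl}g^{il}g^{jk}=4$, a short index chase gives
\[
R_{ijkl}R^{ijkl}=\frac{R^{2}}{144}\bigl(2\cdot 4^{2}-2\cdot 4\bigr)=\frac{R^{2}}{6},
\]
while from $R_{kl}=\tfrac{R}{4}g_{kl}$ (equivalently $R_{kl}=-\varepsilon\lambda g_{kl}$) one gets $R_{kl}R^{kl}=\tfrac{R^{2}}{4}$.

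Substituting these three pieces yields
\[
\mathcal{G}=\frac{R^{2}}{6}-4\cdot\frac{R^{2}}{4}+R^{2}=\frac{R^{2}}{6}.
\]
Finally, I would plug in $R=-4\varepsilon\lambda$ from (\ref{rscalar}), so that $R^{2}=16\lambda^{2}$, and obtain
\[
\mathcal{G}=\frac{16\lambda^{2}}{6}=\frac{8}{3}\lambda^{2},
\]
as claimed. The hypothesis that $A^i$ is not parallel is there only to exclude the degenerate subcase in which $R_{ij}$ is covariantly constant (so $\lambda$ might trivialize the Gauss-Bonnet data); it does not enter the algebraic manipulation above, which is the only real content of the corollary. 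There is no genuine obstacle here: since Theorem \ref{res1} already delivers constant sectional curvature, the proof is really just a careful bookkeeping of index contractions in the definition of $\mathcal{G}$.
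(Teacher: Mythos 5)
Your computation is correct, and it arrives at the same conclusion by a slightly different route than the paper. The paper's proof does not touch the full Riemann tensor at all: it invokes the identity valid in \emph{any} $4$-dimensional conformally flat spacetime, $\mathcal{G}=-2R_{kl}R^{kl}+\frac{2}{3}R^2$ (a consequence of the vanishing Weyl tensor), and then simply substitutes $R_{kl}=-\varepsilon\lambda g_{kl}$, $R=-4\varepsilon\lambda$ from Theorem \ref{res1} to get $\mathcal{G}=-8\lambda^2+\frac{32}{3}\lambda^2=\frac{8}{3}\lambda^2$. You instead use the stronger conclusion of Theorem \ref{res1} that the space has constant sectional curvature, write $R_{ijkl}=\frac{R}{12}(g_{ik}g_{jl}-g_{il}g_{jk})$, and evaluate all three contractions in the definition of $\mathcal{G}$; your index bookkeeping ($R_{ijkl}R^{ijkl}=\frac{R^2}{6}$, $R_{kl}R^{kl}=\frac{R^2}{4}$, hence $\mathcal{G}=\frac{R^2}{6}$ with $R^2=16\lambda^2$) is accurate. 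The trade-off: the paper's identity needs only conformal flatness plus the Einstein condition, so it is marginally more economical and works before one knows anything about the full curvature tensor, whereas your argument leans on the maximal-symmetry statement (conformally flat Einstein $\Rightarrow$ constant curvature), which Theorem \ref{res1} does assert, so nothing is missing. Your side remark on the ``$A^i$ not parallel'' hypothesis is also consistent with the paper: neither proof actually uses it in the algebra, and it serves only to exclude the degenerate situation (parallel $A^i$ forces $\lambda=0$ and a Ricci-flat space, conflicting with the non-degeneracy built into the definition of a $(GR)_4$).
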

\begin{proof}
In any $4$-dimensional conformally flat spacetime, $\mathcal{G}$ can be expressed as
\be\mathcal{G}=-2R_{kl}R^{kl} +\frac{2}{3}R^2\label{gconf},\ee
by (\ref{ricci}) which reduces to 
\[ \mathcal{G}=\frac{8}{3}\lambda^2.\] 
\end{proof}
\begin{theorem}
In a conformally flat $(GR)_4$ satisfying $f(R,\mathcal{G})$-gravity, if the four-velocity vector is identical with $A^i$, then the effective pressure $p^{\text{eff}}$ and the effective energy density $\rho^{\text{eff}}$ are given by 
$$p^{\text{eff}} = p^{(m)}+\frac{1}{\kappa}\left(\frac{f}{2}-\lambda f_R-\frac{4\lambda^2}{3}f_\mathcal{G}-\lambda\right),$$
$$\rho^{\text{eff}}=\rho^{(m)}-\frac{1}{\kappa}\left(\frac{f}{2}-\lambda f_R-\frac{4\lambda^2}{3}f_\mathcal{G}-\lambda\right).$$
\end{theorem}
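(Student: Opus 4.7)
The strategy is to substitute the conclusions of Theorem~\ref{res1} and Corollary~\ref{res2} directly into the expression for $\kappa T^{\text{curv}}_{ik}$ displayed in the introduction, and then read off the effective pressure and energy density by comparison with the perfect fluid ansatz $T^{\text{eff}}_{ik}=(p^{\text{eff}}+\rho^{\text{eff}})u_iu_k+p^{\text{eff}}g_{ik}$ with $u^i=A^i$.

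First, since the fluid four-velocity is timelike in the signature $(-,+,+,+)$, we have $\varepsilon=A^iA_i=-1$. Theorem~\ref{res1} then gives $R_{kl}=\lambda g_{kl}$ and $R=4\lambda$, while Corollary~\ref{res2} gives $\mathcal{G}=\tfrac{8}{3}\lambda^2$. Equally important is equation~(\ref{a11}) from the proof of Theorem~\ref{res1}, which says $\nabla_i\lambda=0$; consequently $\lambda$ is a (global) constant, and so are $R$, $\mathcal{G}$, $f$, $f_R$ and $f_\mathcal{G}$.

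This constancy is the crucial simplification: every derivative term in $\kappa T^{\text{curv}}_{ik}$, namely the combinations $\nabla_i\nabla_k f_R$, $g_{ik}\Box f_R$, $R\,\nabla_i\nabla_k f_\mathcal{G}$, $R^m{}_i\nabla_m\nabla_k f_\mathcal{G}$, $R_{nimk}\nabla^n\nabla^m f_\mathcal{G}$ and so on, vanishes identically. What remains of $\kappa T^{\text{curv}}_{ik}$ is just the algebraic piece
\[
\kappa T^{\text{curv}}_{ik}=-\tfrac{1}{2}g_{ik}(Rf_R+\mathcal{G}f_\mathcal{G}-f)+(1-f_R)\Big(R_{ik}-\tfrac{R}{2}g_{ik}\Big).
\]
Inserting the Einstein relation $R_{ik}-\tfrac{R}{2}g_{ik}=-\lambda g_{ik}$ together with $R=4\lambda$ and $\mathcal{G}=\tfrac{8}{3}\lambda^2$, a few lines of elementary algebra collapse the right-hand side to a pure-trace form
\[
\kappa T^{\text{curv}}_{ik}=\Big(\tfrac{f}{2}-\lambda f_R-\tfrac{4\lambda^2}{3}f_\mathcal{G}-\lambda\Big)g_{ik}.
\]

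Finally, one adds $T^{(m)}_{ik}=(p^{(m)}+\rho^{(m)})A_iA_k+p^{(m)}g_{ik}$ and matches coefficients against $T^{\text{eff}}_{ik}=(p^{\text{eff}}+\rho^{\text{eff}})A_iA_k+p^{\text{eff}}g_{ik}$: the $A_iA_k$ coefficient yields $p^{\text{eff}}+\rho^{\text{eff}}=p^{(m)}+\rho^{(m)}$, while the $g_{ik}$ coefficient reproduces the stated formula for $p^{\text{eff}}$, from which the formula for $\rho^{\text{eff}}$ is immediate. There is no serious obstacle here; the only point requiring care is verifying, once and for all, that every $\nabla\nabla f_R$ and $\nabla\nabla f_\mathcal{G}$ term in $T^{\text{curv}}_{ik}$ really does drop out under the constancy of $R$ and $\mathcal{G}$ forced by $\nabla_i\lambda=0$.
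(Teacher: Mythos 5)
Your proposal is correct and takes essentially the same route as the paper: the paper's proof just substitutes Theorem \ref{res1} and Corollary \ref{res2} into (\ref{gb}) and asserts, ``after some calculations,'' the same pure-trace form $\kappa T^{\text{curv}}_{ik}=\bigl(\tfrac{f}{2}-\lambda f_R-\tfrac{4\lambda^2}{3}f_\mathcal{G}-\lambda\bigr)g_{ik}$ that you obtain, your version merely making explicit the two points the paper leaves implicit ($\varepsilon=-1$ for a timelike four-velocity, and $\nabla_i\lambda=0$ forcing $R$, $\mathcal{G}$, $f_R$, $f_\mathcal{G}$ to be constant so that all the $\nabla\nabla$-terms drop out). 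The final coefficient matching against the perfect-fluid ansatz with $u^i=A^i$ is the same step the paper summarizes as ``Hence the result.''
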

\begin{proof}
Using the results of Theorem \ref{res1} and Corollary \ref{res2} in the Gauss-Bonnet gravity equations (\ref{gb}), after some calculations we obtain
\be \kappa T^{\text{eff}}_{ik}=\kappa T^{\text{matter}}_{ik}
+\left(\frac{f}{2}-\lambda f_R-\frac{4\lambda^2}{3}f_\mathcal{G}-\lambda\right)g_{ik	}.\ee
Hence the result.
\end{proof}
\section{\textbf{Constraints on $f(R,\mathcal{G})$-gravity in a $(GR)_4$}}

Several models of $f(R,\mathcal{G})$ gravity theories have been proposed in the literature. In this section, we consider two different models of $f(R,\mathcal{G})$ gravity theories to analyze our results. To examine different energy conditions for both the models we consider the vacuum case for simplicity \cite{K.A}, that is, $\rho^{(m)}= p^{(m)}=0$.
\\
\\
\textbf{Case:I} $f(R,\mathcal{G})= R+\sqrt{\mathcal{G}}$. R. Myrzakulov et al \cite{R.M} also considered this type of $f(R,\mathcal{G})$ model. In this case, the effective energy density and the effective pressure for a perfect fluid matter as follows: 

\begin{equation}
\rho^{eff}=\frac{1}{\kappa} \left[ 2\lambda+ \frac{2\lambda^2}{3\sqrt{\mathcal{G}}}-\left(\frac{ R+ \sqrt{\mathcal{G}}}{2} \right) \right]
\end{equation}
and
\begin{equation}
p^{eff}=\frac{1}{\kappa} \left[\left(\frac{ R+ \sqrt{\mathcal{G}}}{2} \right)-2\lambda - \frac{2\lambda^2}{3\sqrt{\mathcal{G}}} \right]
\end{equation}
\\
We know that the relationship between energy density and pressure given by $\frac{p}{\rho}$ is known as the equation of state parameter(EoS). In this case, the equation of state parameter turns out to be $\omega=-1$. Hence the universe is dominated by cosmological constant and the given model is consistent with well established $\Lambda$CDM theory. As weak energy condition(WEC) is the combination of positive density and null energy condition(NEC) and in the present setting NEC is zero, thus we examine the behavior of dominant energy condition(DEC), strong energy condition(SEC), and the density parameter.

\begin{figure}[h]
{\includegraphics[scale=0.55]{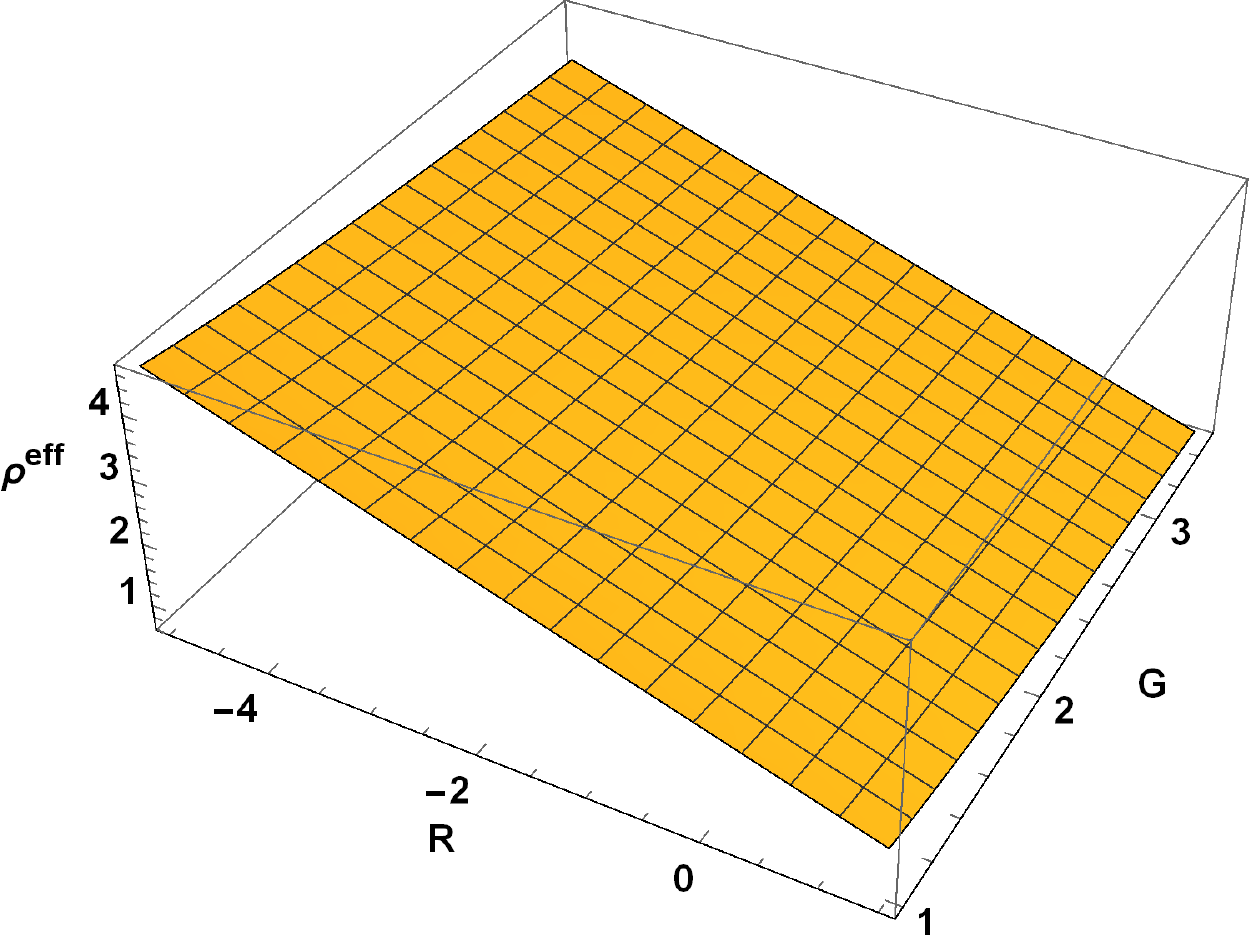}}
{\includegraphics[scale=0.55]{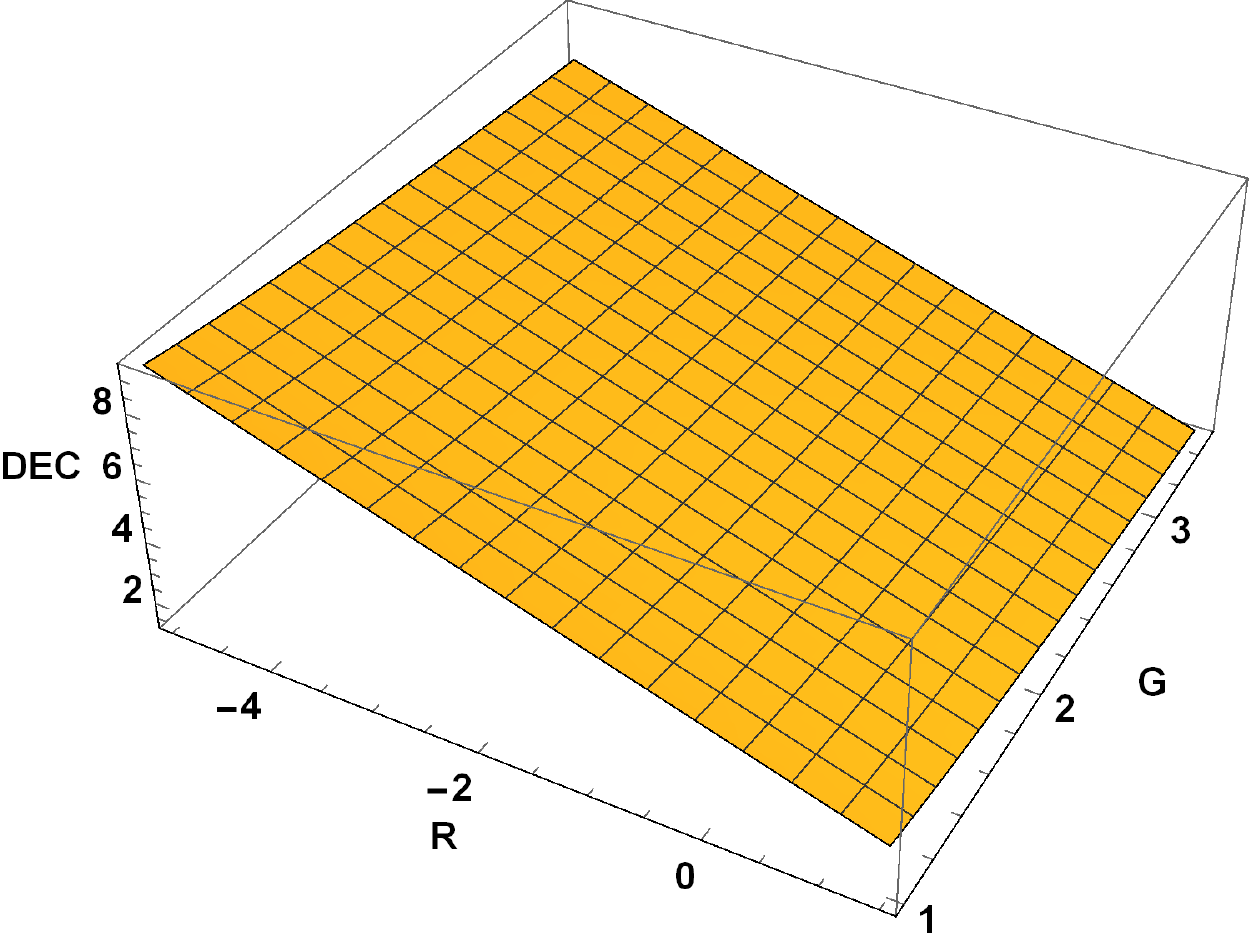}}
{\includegraphics[scale=0.55]{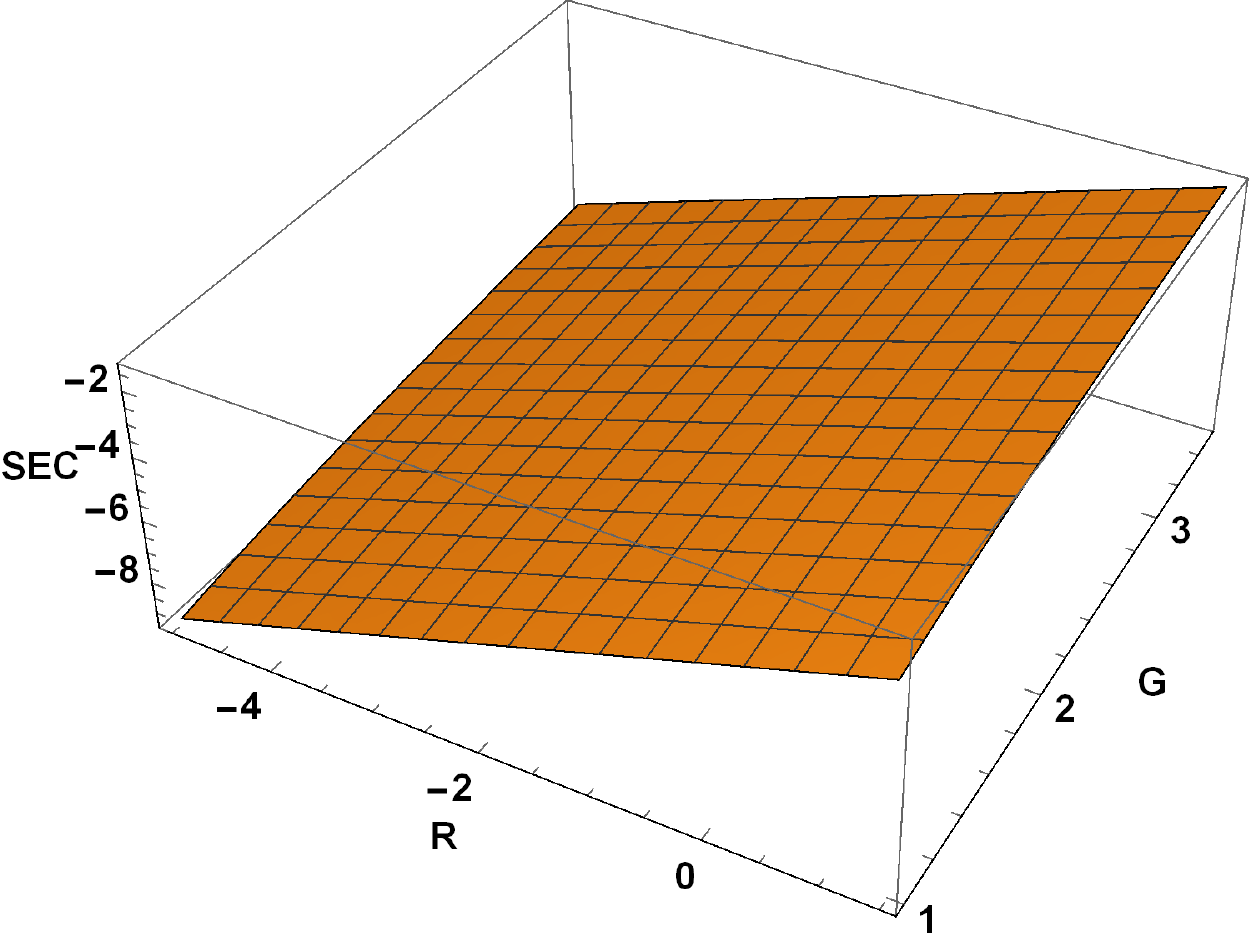}}
\caption{Energy conditions for $f(R,\mathcal{G})=R+\sqrt{\mathcal{G}}$ with $\lambda=1$, $-5\leq R \leq 1.5$ and $1\leq \mathcal{G} \leq 3.5 $}\label{fig1}
\end{figure}
\newpage
From Figure \ref{fig1} it is clear that the behavior of density parameter is positive, DEC also shows the positive behavior while the SEC violates it, which represents the accelerated expansion phase of the universe.  
\\
\\
\textbf{Case:II} $f(R,\mathcal{G})= R^2+\mathcal{G}ln\mathcal{G}$. Sebastian Bahamonde et al \cite{SB}  also considered this type of $f(R,\mathcal{G})$ model. In this case, the effective energy density and the effective pressure for a perfect fluid matter as follows: 
 
\begin{equation}
\rho^{eff}=\frac{1}{\kappa} \left[ \lambda(1+2R)+\frac{4\lambda^2}{3}(1+ln\mathcal{G})- \left( \frac{R^2+ \mathcal{G}ln\mathcal{G}}{2} \right) \right]
\end{equation}
and
\begin{equation}
p^{eff}=\frac{1}{\kappa} \left[ \left( \frac{R^2+ \mathcal{G}ln\mathcal{G}}{2} \right)-\lambda(1+2R)-\frac{4\lambda^2}{3}(1+ln\mathcal{G}) \right]
\end{equation}

\begin{figure}[h]
{\includegraphics[scale=0.55]{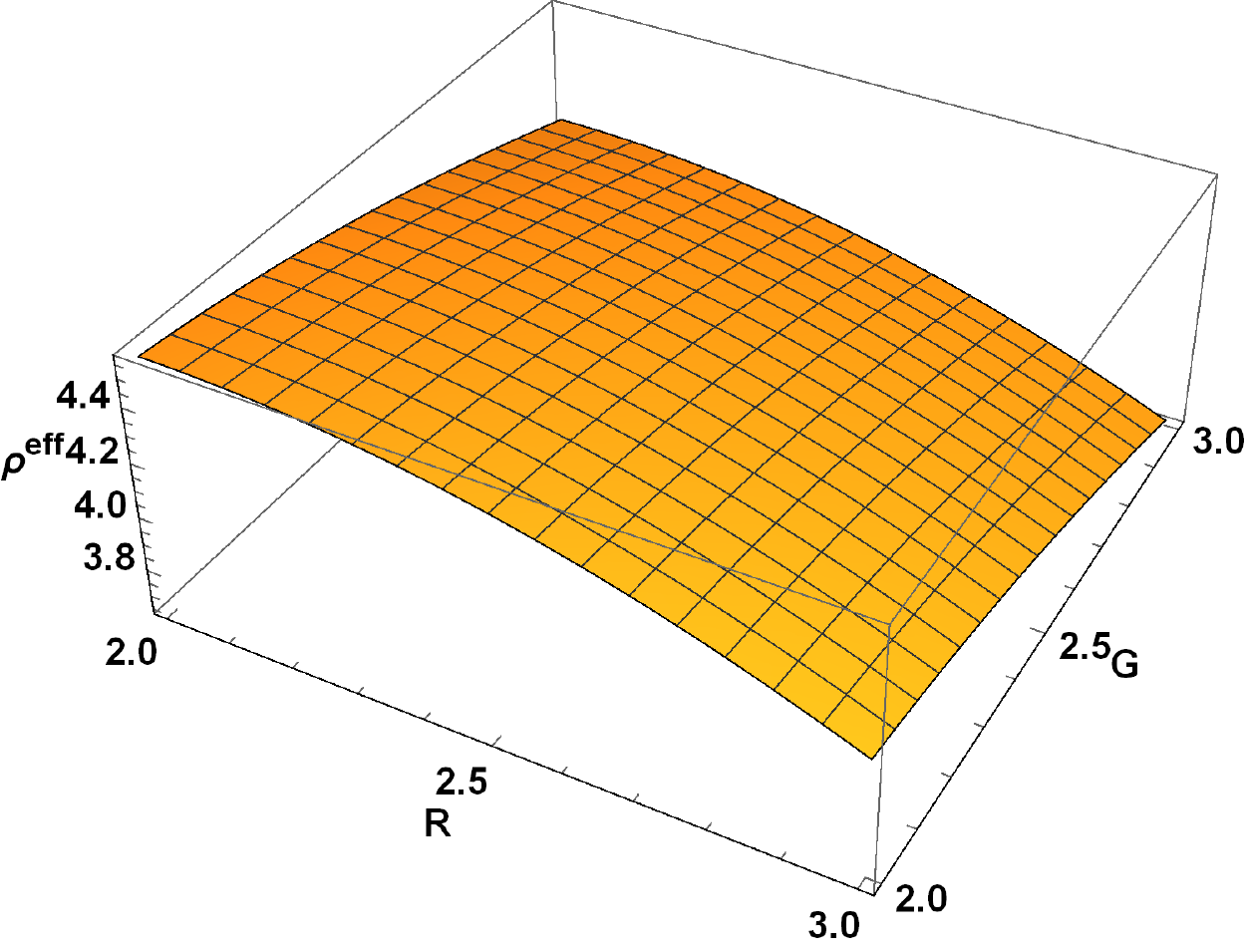}}
{\includegraphics[scale=0.55]{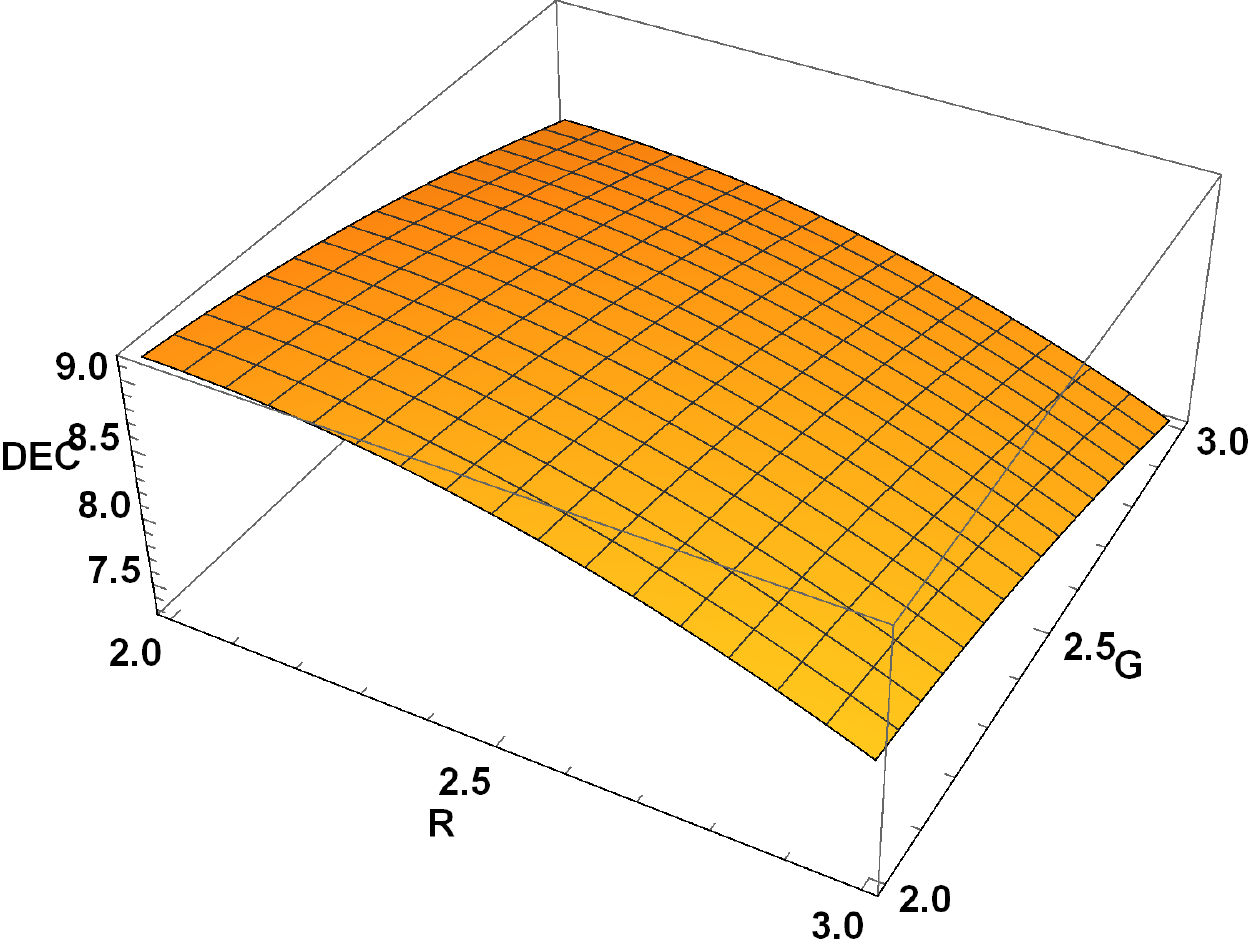}}
{\includegraphics[scale=0.55]{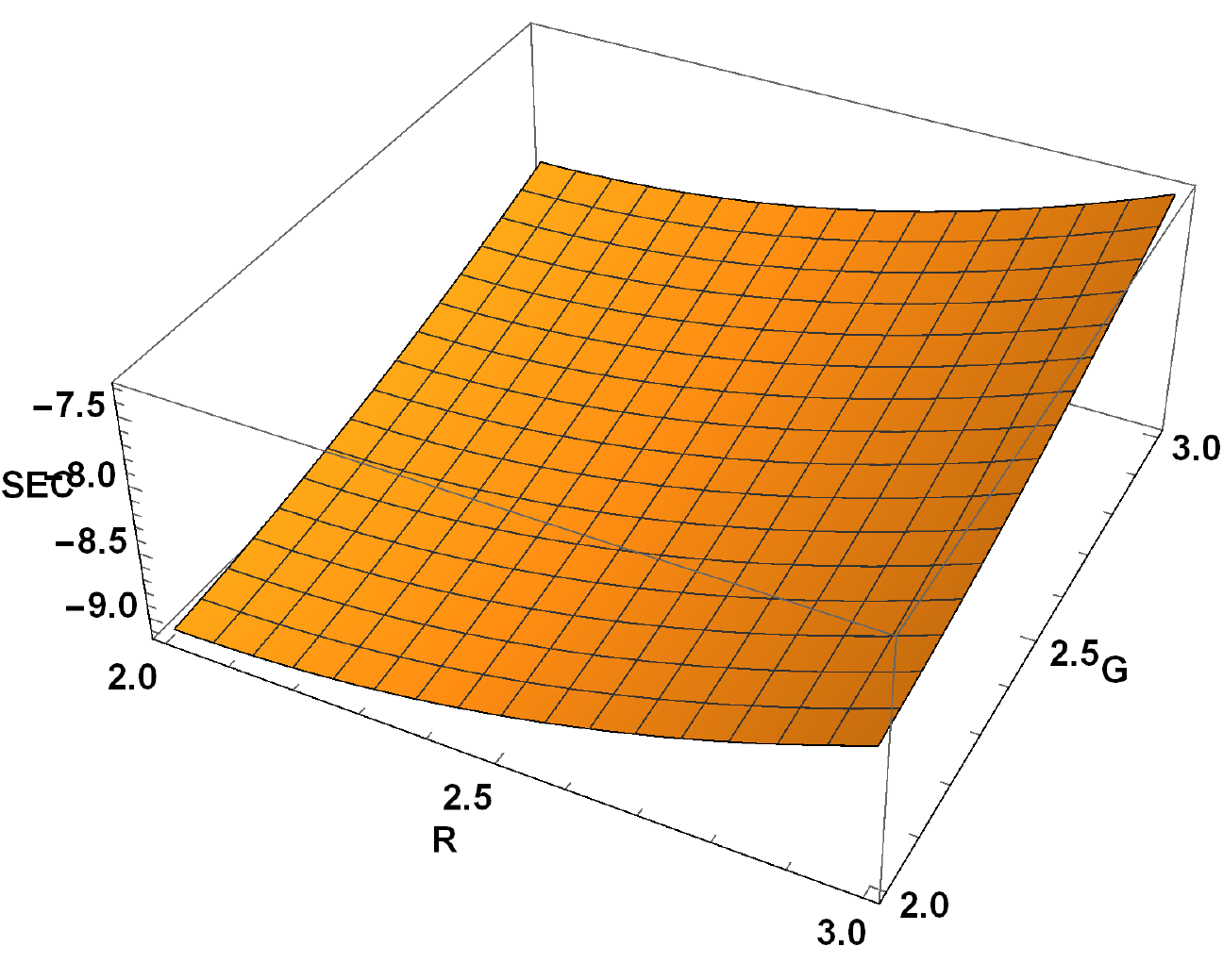}}
\caption{Energy conditions for $f(R,\mathcal{G})=R^2+\mathcal{G}ln\mathcal{G}$ with $\lambda=1$, $2\leq R \leq 3$ and $2\leq \mathcal{G} \leq 3 $}
\label{fig2}
\end{figure}
\newpage
From Figure \ref{fig2} it is clear that the behavior of density parameter is positive, DEC also shows the positive behavior while the SEC violates it. In this case, the equation of state parameter turns out to be $\omega=-1$.

\section{\textbf{Discussion}}

Nowadays, to describe the observed late-time cosmic acceleration the modified theories of gravity are considered as the most valid candidate. In this article, we have studied the effect of the geometrical constraint of $(GR)_4$ type spacetime on an interesting gravity theory, namely, the $f(R,\mathcal{G})$-gravity. To examine the self consistencies of these modified theories of gravity, the different energy conditions are favorable candidates and these energy conditions can be obtained via the popular Raychaudhuri equation. In this manuscript, we have investigated the null, dominant, weak, and strong energy conditions under $f(R,\mathcal{G})$-gravity in a conformally flat generalized Ricci recurrent perfect fluid space-time. We have taken two different $f(R,\mathcal{G})$ models, $f(R,\mathcal{G})= R+\sqrt{\mathcal{G}}$ and another one is $f(R,\mathcal{G})= R^2+\mathcal{G}ln\mathcal{G}$. We conclude that NEC is always satisfied in the present setting, weak and dominant energy conditions are satisfied when $-5\leq R \leq 1.5$ , $1\leq \mathcal{G} \leq 3.5 $ and $2\leq R \leq 3$ , $2\leq \mathcal{G} \leq 3 $ for the two models respectively provided with $\lambda>0$ while the strong energy condition shows the negative behavior for both the models. Hence both the models have good agreement with the current observational findings which depict the accelerated expansion of the universe.


\textbf{Acknowledgement:} A.D. and L.T.H. are supported by the grant \\  FRGS/1/2019/STG06/UM/02/6. R.S. acknowledges University Grants Commission (UGC), Govt. of India, New Delhi, for awarding Junior Research Fellowship (NFOBC) (No.F.44-1/2018(SA-III)) for financial support. The authors are grateful to the referee and editor  for their valuable suggestions towards the improvement of the paper.

\end{document}